\title{Greed is slow on sparse graphs of oriented valued constraints} 
\author{Artem Kaznatcheev}{Department of Mathematics, and Department of Information and Computing Sciences, Utrecht University, The  Netherlands}{a.kaznatcheev@uu.nl}{https://orcid.org/0000-0001-8063-2187}{}
\author{Sofia {Vazquez Alferez}}{Department of Mathematics, and Department of Information and Computing Sciences, Utrecht University, The Netherlands}{s.vazquezalferez@uu.nl}{https://orcid.org/0000-0002-1541-8683}{}
\authorrunning{A. Kaznatcheev and S. {Vazquez Alferez}} 
\keywords{valued constraint satisfaction problem, local search, algorithm analysis, constraint graphs} 
\let\c@author\relax
\definecolor{negCol}{rgb}{0.6, 0.6, 0.6} 
\definecolor{posCol}{rgb}{0.9, 0.9, 0.9}
\newcounter{darrow}
\newcommand\xrightarrowdashed[1]{%
\stepcounter{darrow}%
\;\begin{tikzpicture}
\node (\thedarrow) {\strut$#1$};
\draw[->,dashed] (\thedarrow.south west) -- (\thedarrow.south east);
\end{tikzpicture}\;%
}
\let\textcite\cite 
\newclass{\UEOPL}{UEOPL}
\newlang{\wSAT}{Weighted\;2\text{-}SAT}
\newlang{\VCSP}{VCSP}
\begin{document}

\maketitle

\begin{abstract}
Greedy local search is especially popular for solving valued constraint satisfaction problems (\VCSP{s}).
Since any method will be slow for some \VCSP{s}, we ask:
what is the simplest \VCSP{} on which greedy local search is slow?
We construct a \VCSP{} on $6n$ Boolean variables for which greedy local search takes $7(2^n - 1)$ steps to find the unique peak.
Our \VCSP{} is simple in two ways.
First, it is very sparse: its constraint graph has pathwidth $2$ and maximum degree $3$.
This is the simplest \VCSP{} on which some local search could be slow.
Second, it is `oriented' – there is an ordering on the variables such that later variables are conditionally-independent of earlier ones.
Being oriented allows many non-greedy local search methods to find the unique peak in a quadratic number of steps. 
Thus, we conclude that -- among local search methods -- greed is particularly slow. 
\end{abstract}

\section{Introduction}
We cannot hope for a polynomial time algorithm to solve arbitrary combinatorial optimization problems.
But we still need to try to do something to solve these hard problems.
Given that many hard problems can be defined as finding an assignment $x \in \{0,1\}^n$ that maximizes some pseudo-Boolean function -- that we call a \emph{fitness function} -- many turn to local search as a heuristic method~\cite{LocalSearch_Book1}.
Local search methods start at an initial assignment, then apply a fixed update rule to select a better adjacent assignment to move to until no further improvement is possible. 
Such a sequence of adjacent improving assignments is called an ascent.
One of the most popular update rules is to always select the adjacent assignment that increases fitness by the largest amount, that is, to follow the steepest ascent-- 
this update rule defines greedy local search~\cite{approxBook}.
Since greed is just one of many possible update rules, a natural question arises:
is greed good?
Or more specifically: is greedy local search fast or slow compared to other local search methods?

Since no local search method will be able to solve \emph{all} instances of hard combinatorial optimization problems in a polynomial number of steps, we need a more nuanced criterium for what makes a local search method fast or slow.
We find this nuance in looking at the performance of our methods on subsets of instances of differing simplicity instead of on all instances.
Now, if a method cannot solve particularly simple sets of instances in a polynomial number of steps -- at least when compared to other similar methods -- then we call it slow.

To define our set of all instances and refine that to simple instances, we turn to valued constraint satisfaction problems (\VCSP{s}).
In the language of constraint programming, finding the maximum of a pseudo-Boolean function is the same as solving a Boolean \VCSP. 
Given that even binary Boolean \VCSP{s} can express both \NP-hard and \PLS-complete problems~\cite{PLS_Survey, PLS},
we define -- in \cref{sec:background} -- our set of all instances as the set of all binary Boolean \VCSP{s}.
Each binary \VCSP{} can be interpreted as defining a constraint graph with an edge between any two variables that share a constraint.
This allows us to use the sparsity of the resulting graphs as a measure of simplicity: 
we focus on sets of instances of bounded degree and of bounded treewidth or -- more restrictively -- pathwidth.

In terms of degree, Els{\"a}sser and Tscheuschner \textcite{MaxCutDeg5PLS} showed that binary Boolean \VCSP{s} are \PLS-complete under tight reductions even if the constraints are restricted to \lang{MAXCUT}.
The tight \PLS-reduction means that there are degree $5$ instances where all ascents are exponential -- including the steepest ascent taken by greedy local search.
Monien and Tscheuschner \textcite{MaxCutDeg4} constructed a family of instances of degree $4$ for which they claim all ascents are exponential.
Finally, all ascents are quadratic when the constraint graph has maximum degree $2$ (Theorem 5.6 in Kaznatcheev \textcite{KazThesis}).
Taken together, this only leaves open the question of efficiency of greedy local search for binary Boolean \VCSP{s} of degree $3$.

Treewidth as a sparsity parameter has a more complicated -- and perhaps more interesting -- relationship to the efficiency of local search.
There exists a polynomial time non-local search algorithm for finding not just a local maximum but the global maximum for \VCSP{s} of bounded treewidth~\cite{BB73,VCSPsurvey}.
Thus, binary Boolean \VCSP{s} of bounded treewidth are not hard for \PLS.
Unlike bounded degree, bounded treewidth instances really are a class of simple instances.
But the existence of a polynomial-time non-local search algorithm for solving bounded treewidth \VCSP{s}, does not mean that local search will be efficient at finding even a local peak.
Cohen et al. \textcite{tw7} have already shown that greedy local search requires an exponential number of steps for Boolean \VCSP{s} of pathwidth $7$. 
The simplest set of instances on which greedy local search fails was subsequently lowered to pathwidth $4$~\cite{pw4}.
Recently, Kaznatcheev and Vazquez Alferez \textcite{conditionallySmooth} showed that an old construction of Hopfield networks by Haken and Luby \textcite{hakenSteepest} provides a binary Boolean \VCSP{} of pathwidth $3$ on which greedy local search takes an exponential number of steps.
Since Kaznatcheev, Cohen and Jeavons \textcite{repCP} have shown that \emph{all} local search methods will take at most a quadratic number of steps on tree-structured binary Boolean \VCSP{s} (i.e., treewidth $1$), this leaves only the case of treewidth $2$ as open for the question of whether greedy local search is fast or slow.

There are partial results for these two open questions on degree and treewidth -- mostly focused on \emph{all} or \emph{some} ascents, rather than \emph{steepest} ascents -- but they cannot be further extended.
Poljak \textcite{MaxCutDeg3Easy} showed that if the \VCSP{} instance is allowed only \lang{MAXCUT} constraints then for degree $3$ instances, all ascents are quadratic.
However, Poljak's \textcite{MaxCutDeg3Easy} technique of rewriting degree $3$ \lang{MAXCUT} constrain graphs by instances with small weights cannot extend to arbitrary binary constraints.
There are degree $3$ \VCSP-instances that require exponentially large weights (Example 5.10 in Kaznatcheev \textcite{KazThesis}) and, in their Example 7.2, Kaznatcheev, Cohen and Jeavons \textcite{repCP} gave an explicit family of instances with max degree $3$ where some ascents are exponential.
Similarly, the encouragement-path proof techniques for showing that all local search methods are efficient on \VCSP{s} of treewidth $1$~\cite{repCP} cannot be extended to treewidth $2$.
Specifically, Kaznatcheev, Cohen and Jeavons \textcite{repCP}'s Example 7.2 not only has maximum degree $3$ but also pathwidth $2$.
However, although some ascents are exponentially long in this family of instances, the steepest ascents are all short and greedy local search can find the peak in a linear number of steps.
More generally, Kaznatcheev and van Marle \textcite{pw4} optimistically conjectured that on \emph{any} family of \VCSP-instances of pathwidth $2$, greedy local search will take at most a polynomial number of steps.

In this paper, we resolve these two open questions on the efficiency of greedy local search for \VCSP-instances of degree $3$ and treewidth $2$.\footnote{
In parallel work to ours, van Marle~\cite{pw2MSc} developed a similar construction of exponential steepest ascent.
His construction has the same constraint graph -- with degree $3$ and pathwidth $2$ -- but slightly different weights.
The key difference is a much more involved proof of exponential steepest ascents:
van Marle~\cite{pw2MSc} shows how to simulate a particular prior construction of \emph{some} exponential ascent as a \emph{steepest} ascent on a ``padded'' instance.
This is similar to the prior approaches taken by Cohen et al.~\cite{tw7} and Kazntcheev and van Marle~\cite{pw4}.
In contrast, we focus on how to compose individual gadget's fitness landscapes and their steepest (partial) ascents rather than introducing new subgadgets for simulating whole ascents.
We find our approach simpler, and think that future work can more easily generalize our approach to other local search methods.
}
In \cref{sec:construction}, we construct \VCSP-instances $\mathcal{C}^\pm_{n,\leq n}$ on $6n$ Boolean variables with $7n - 1$ binary constraints and $6n$ unary constraints with a constraint graph of maximum degree $3$ and pathwidth $2$ (\cref{prop:pw2}).
In \cref{sec:ascent}, we show that greedy local search takes $7(2^n - 1)$ steps to solve these simple \VCSP-instances (\cref{thm:steepest-ascent}).
We construct $\mathcal{C}^\pm_{n,\leq m}$ as a path of $m$ gadgets $\mathcal{C}^\pm_{n,m},\mathcal{C}^-_{n,m - 1},\ldots,\mathcal{C}^-_{n,2},\mathcal{C}^-_{n,1}$ 
with consecutive gadgets joined by a single binary constraint.
Our proof of long steepest ascents is by induction on the number of gadgets in the path.
We show that the steepest ascent first flips bits only in the first gadget $\mathcal{C}^\pm_{n,m}$ until it flips the variable that participates in the binary constraint linking $\mathcal{C}^\pm_{n,m}$ to $\mathcal{C}^-_{n,m-1}$.
When this linking variable flips to $1$ this gives us the inductive hypothesis of $\mathcal{C}^+_{n,\leq m - 1}$ and when it flips to $0$, it gives the inductive hypothesis of $\mathcal{C}^-_{n,\leq m - 1}$.
The constraint weights in the gadget are chosen in such a way that after this linking variable is flipped, the next potential flip in $\mathcal{C}^\pm_{n,m}$ increases fitness by a lower amount than the flips in $\mathcal{C}^\pm_{n,\leq m - 1}$.
Thus the steepest ascent continues in the part of the \VCSP{} corresponding to the inductive hypothesis for $7(2^{m - 1} - 1)$ steps, until it reaches the local peak of the sub-instance and finally allows flips to continue in $\mathcal{C}^\pm_{n,m}$ that eventually result in the linking variable flipping back, repeating the recursive process for a second time.

We also show that our \VCSP-instances are what Kaznatcheev and Vazquez Alferez \textcite{conditionallySmooth} defined as `oriented' (\cref{prop:oriented}).
This means that the exponential running time of greedy local search on our instances stands in stark contrast to many other non-greedy local search methods that Kaznatcheev and Vazquez Alferez \textcite{conditionallySmooth} have shown to take a quadratic or fewer steps to solve oriented \VCSP{s}.
From this we conclude that among local search methods, greed is particularly slow.

\section{Background}\label{sec:background}

Given some finite set $V$ of variable indexes with size $|V| = d$,\footnote{Most often this is $V = [d]$, but in our construction of $\mathcal{C}^\pm_{n,\leq m}$ in \cref{sec:construction} it will be $V = [m]\times[6]$.}
an \emph{assignment} is a $d$-dimensional Boolean vector $x \in \{0,1\}^{d}$.
For every $i \in V$, $x_i$ refers to the $i$th entry of $x$.
To refer to a substring with variable indexes $S \subseteq V$, we write the partial assignment $x[S] \in \{0,1\}^S$.
To complete a partial assignment, we write $x\in\{0,1\}^{S}y[V \setminus S]$ to mean that $x_i=y_i$ for $i \in V \setminus S$ and free otherwise.
When assignments are sufficiently short we give $x$ and $y$ explicitly, 
for example, $x_10x_3$ refers to an assignment to $3$ variables where the second variable is set to $0$.
If we want to change just the assignment $x_i$ at index $i$ to a value $b \in \{0,1\}$ then we will write $x[i:b]$.
Two assignments are \emph{adjacent} if they differ on a single variable.
Or more formally: $x,y \in \{0,1\}^{d}$ are adjacent if there exists an index $i \in [d]$ such that $y=x[i:\overline{x_i}]$, where $\overline{x_i}=1-x_i$.

A \emph{fitness landscape} is any pseudo-Boolean function $f: \{0,1\}^V \rightarrow \mathbb{Z}$ together with the above notion of adjacent assignments.
Given any $S \subseteq V$, the sublandscape on $S$ given background $y \in \{0,1\}^{V \setminus S}$ is the function $f$ restricted to inputs $x\in\{0,1\}^{S}y[V \setminus S]$.
An assignment $x$ is called a \emph{local peak} in a fitness landscape if $f(x)\geq f(y)$ for all $y$ adjacent to $x$.
A sequence of assignments $x^0,x^1,\ldots,x^T$ is called an \emph{ascent} in the fitness landscape if $x^{t - 1}$ and $x^t$ are adjacent for all $t\in[T]$,  $f(x^{t - 1})<f(x^t)$, and $x^T$ is a local peak.
An ascent is called a \emph{steepest} ascent, if for all $t\in[T]$ and all assignments $y$ adjacent to $x^{t - 1}$ it  is the case that $f(x^t)\geq f(y)$.
Any local search method (that only takes increasing steps) follows an ascent.
Greedy local search follows a steepest ascent.

A binary Boolean valued constraint satisfaction problem (\VCSP) on $d$ Boolean variables with indexes in $V$ is a set of constraints $\mathcal{C} = \{c_S\}$. 
Each constraint is a weight $c_S \in \mathbb{Z} \setminus \{0\}$ with an associated \emph{scope} $S \subseteq V$ of size $|S| \leq 2$.
We say that $c_S$ is a \emph{unary} constraint if $|S| = 1$, and a \emph{binary} constraint if $|S| = 2$.
Overloading notation, the set of constraint $\mathcal{C}$ \emph{implements} a pseudo-Boolean function $\mathcal{C} :\{0,1\}^d\to \mathbb{Z}$ that we call the \emph{fitness function}: 
\begin{equation}
\mathcal{C}(x) = 
c_{\emptyset}+\sum_{c_i \in \mathcal{C}} c_ix_i+\sum_{c_{\{i,j\}} \in \mathcal{C}}c_{\{i,j\}}x_ix_j
\end{equation}
Solving $\mathcal{C}$ means finding an assignment $x$, that maximizes the fitness function $\mathcal{C}(x)$.
\footnote{
One could also consider a \VCSP{}-instance as a set of constraints $\hat{\mathcal{C}} = \{ C_S \}$ where each $C_S: \{0,1\}^S \rightarrow \mathbb{Z}$ is a binary function. 
This formulation is equivalent to our formulation above because an arbitrary constraint with scope $S$ can be expressed as a polynomial. For example, take $S=\{i,j\}$:
\begin{align*}
    C_{S}(x_i,x_j) & = C_{S}(0,0) (1 - x_i)(1- x_j) + C_{S}(1,0)x_i(1- x_j) + C_{S}(0,1)(1 - x_i)x_j + C_{S}(1,1)x_ix_j \\
    & = C_S(0,0) + (C_S(1,0) - C_S(0,0))x_i + (C_S(0,1) - C_S(0,0))x_j \\
    & \quad\quad\quad\quad\quad + (C_S(1,1) - C_S(0,1) - C_S(1,0) + C_S(0,0))x_ix_j
\end{align*}
The second equality groups alike monomials. 
One can convert from the $\hat{\mathcal{C}}$ formulation to the $\mathcal{C}$ formulation by summing alike monomial terms across all the constraints. That is,
the constant terms $C_S(0,0)$ are aggregated into $c_\emptyset$, all the $C_{\{i,\_\}}$ (i.e., coefficients appearing before $x_i$) aggregate into $c_i$, and $c_{\{i,j\}} = C_{\{i,j\}}(1,1) - C_{\{i,j\}}(0,1) - C_{\{i,j\}}(1,0) + C_{\{i,j\}}(0,0)$ (i.e., coefficient appearing before $x_ix_j$).
It takes linear time to covert from $\hat{\mathcal{C}}$ to $\mathcal{C}$ (see Theorem 3.4 in Kaznatcheev, Cohen and Jeavons \textcite{repCP}).
}

When discussing graph-theoretical properties of $\mathcal{C}$, we treat the scopes of binary constraints as edges.
So $V(\mathcal{C}) = V$ and $E(\mathcal{C}) = \{{i,j} \; | \; i \neq j \text{ and } c_{\{i,j\}} \in \mathcal{C}\}$ is the set of scopes of the binary constraints of $\mathcal{C}$.
For each variable index $i \in V$, we define the neighbourhood $N_\mathcal{C}(i) = \{ j \; | \; {i,j} \in E(\mathcal{C})\}$ as the set of variable indexes in $V \setminus \{i\}$ that appear in a constraint with $i$.
In this paper, we measure the simplicity of a \VCSP-instance $\mathcal{C}$ by the maximum degree and by the pathwidth of its constraint graph.
Given a graph $G=(V,E)$, the \emph{pathwidth} of $G$ is the minimum possible width of a path decomposition of $G$.
A \emph{path decomposition} of $G$ is a sequence of sets $P=\{X_1,X_2,\ldots, X_p\}$ where $X_r\subset V(G)$ for $r\in[p]$ with the following three properties:
\begin{enumerate}
    \item Every vertex $v\in V(G)$ is in at least one set $X_r$.
    \item For every edge $\{u,v\}\in E(G)$ there exists an $r \in [p]$ such that $X_r$ contains both $u$ and $v$.
    \item For every vertex $v\in V(G)$, if $v\in X_r\cap X_s$ 
    then $u\in X_\ell$ for all $\ell$ such that $r\leq \ell\leq s$.
\end{enumerate}
The \emph{width} of a path decomposition is defined as  $\max_{X_r\in P}|X_r|-1$.
We refer the reader to \cite{parameterizedBook} for more details and for the definition of treewidth, and to \cite{graphBook} for standard graph terminology. 

It is often useful to see how the value of the fitness function $\mathcal{C}$ changes when a single variable is modified.
In particular, we denote with $\nabla_i\mathcal{C}(x)=\mathcal{C}(x[i:1])-\mathcal{C}(x[i:0])$ the fitness change associated with changing variable $x_i=0$ to $x_i=1$ given some background assignment $x$.
It is easy to see that $\nabla_i\mathcal{C}(x)=c_i+\sum_{j\in N_\mathcal{C}(i)}c_{\{i,j\}}x_j$, and the value of $\nabla_i\mathcal{C}(x)$ depends only on the assignment to variables with indexes in $N_\mathcal{C}(i)$.
We use this to overload $\nabla_i\mathcal{C}$ to partial assignments: if $y\in\{0,1\}^{N(i)}$ we consider $\nabla_i\mathcal{C}(y)$ to be well defined. 

We say that $x_i$ has \emph{preferred assignment} $1$ in background $x$ if $\nabla_i\mathcal{C}(x)>0$ and preferred assignment $0$ in background $x$ if $\nabla_i\mathcal{C}(x)<0$. 

\begin{definition}\label{def:sign-depends}
    Given two indexes $i\neq j$ we say that $i$ \emph{sign-depends} on $j$ in background assignment $x$ when $\text{sign}(\nabla_i\mathcal{C}(x))\neq\text{sign}(\nabla_i\mathcal{C}(x[j:\overline{x_j}]))$.
    If there is no background assignment $x$ such that $i$ sign-depends on $j$ then we say that $i$ does not sign depend on $j$.
    If for all $j\neq i$ we have that $i$ does not sign-depend on $j$ then we say that $i$ is sign independent.
\end{definition}
In other words, \cref{def:sign-depends} tells us that if $i$ sign-depends on $j$, the preferred assignment of variable $x_i$ depends on the assignment to variable $x_j$.

\begin{definition}[Kaznatcheev and Vazquez Alferez \textcite{conditionallySmooth}]\label{def:oriented}
    A \VCSP{-}instance $\mathcal{C}$ is \emph{oriented} if for every pair of indexes $\{i,j\}$ we have that either $i$ does not sign-depend on $j$ or $j$ does not sign depend on $i$.
    If a \VCSP{-}instance $\mathcal{C}$ is oriented and $j$ sign depends on $i$ we assign a direction from $i$ to $j$ to the edge $\{i,j\}\in E(\mathcal{C})$ (i.e. we orient the edge, hence the name).
\end{definition}
The constraint graphs of oriented \VCSP{s} have no directed cycles \cite{conditionallySmooth}.
Thus, if $y$ is any assignment to the first $k$ variables of a topological ordering of the variables of an oriented constraint graph, then $k$ is sign-independent given background $y$. 
In other words, if $\mathcal{C}$ is oriented, there is an ordering on the variables such that later variables are conditionally-independent of earlier ones. 
This implies that the fitness landscape of $\mathcal{C}$ is single peaked on every sublandscape~\cite{conditionallySmooth}.
Depending on the research community, such landscapes are known as semismooth fitness landscapes~\cite{evoPLS,conditionallySmooth}, completely unimodal pseudo-Boolean functions~\cite{completelyUnimodal}, or acyclic unique-sink orientations of the hypercube~\cite{AUSO_Thesis,AUSO}.
Given that fitness landscapes implemented by oriented \VCSP{s} are single-peaked, we use $x^*(\mathcal{C})$ to denote the peak of the landscape implemented by the oriented \VCSP{-}instance $\mathcal{C}$.

\section{Construction of $\mathcal{C}^\pm_{n,\leq m}$}
\label{sec:construction}

Given two parameters $1 \leq m \leq n$, in this section, we construct the \VCSP{-}instances $\mathcal{C}^+_{n,\leq m}$ and $\mathcal{C}^-_{n,\leq m}$ on $6m$ variables and, in \cref{sec:ascent}, show that they both have a steepest ascent of length $7(2^m - 1)$. 
We construct the $\mathcal{C}^\pm_{n,\leq m}$ as a path of $m$ gadgets $\mathcal{C}^\pm_{n,m},\mathcal{C}^-_{n,m - 1},\ldots,\mathcal{C}^-_{n,2},\mathcal{C}^-_{n,1}$
where each gadget $\mathcal{C}^\pm_{n,k}$ is defined on $6$ variables $V_k = \{(k,i) \; | \; 1 \leq i \leq 6\}$.
For notational convenience, we define $V_{\leq m} := \cup_{k = 1}^m V_k$.
Note that 
the two \VCSP{s} are exactly the same except on the $m$th gadget where $\mathcal{C}^+_{n,\leq m}$ has $\mathcal{C}^+_{n,m}$ and $\mathcal{C}^-_{n,\leq m}$ has $\mathcal{C}^-_{n,m}$.
We will present the construction of the gadgets $\mathcal{C}^\pm_{n,k}$ in two stages.
First, we will define the scopes of all the constraints to get the constraint graph
and show that these constraint graphs are very sparse (\cref{prop:pw2}).
Second, we will assign weights to the constraints and show that the \VCSP{s} are oriented (\cref{prop:oriented}).

\begin{figure}[!tb]
    \centering
    
    \begin{tikzpicture}[every node/.style={minimum size=30pt,font=\small}]
    \def\w{M_k} 
    \def\s{S} 
    \tikzmath{\xunit = 2.6; \yunit =3;}
        \node[align=center,draw] (M) at (2*\xunit,-\yunit) {$M_k=6(2^k-2)$\\$S=2n+1$\\$s_k=n+1-k$};
        
        \node[draw, dashed, circle ] (inv) at (-2*\xunit,0) {\tiny $k+1,6$};
        
        
        \node[draw, circle, label={[label distance=0cm, rotate=-30]-90:$-(2(\w+5)+1)\cdot\s$}] (ANDin) at (-\xunit,0) {$k,1$};
        \node[draw, circle, label={[label distance=0cm, rotate=20]90:$-(\w+4)\cdot\s-s_k$}] (AND1) at (0,\yunit) {$k,2$};
        \node[draw, circle, label={[label distance=0cm, rotate=-20]-90:$-(\w+5)\cdot\s$}] (AND2) at (0,-\yunit) {$k,4$};

        \node[draw, circle, label={[label distance=0cm,rotate=-20]90:$-(\w+3)\cdot \s$}] (XOR1) at (\xunit,\yunit) {$k,3$};
        \node[draw, circle, label={[label distance=0cm,rotate=20]-90:$-\s$}] (XOR2) at (\xunit,-\yunit) {$k,5$};
        \node[draw, circle, label={[label distance=0cm,rotate=20]-90:$-(\w+1)\cdot\s$}] (XORjoin) at (2*\xunit,0) {$k,6$};
        \node[draw, dashed, circle ] (XORout) at (3*\xunit,0) {\tiny $k-1,1$};
        
        \draw[->, thick] (ANDin) -- (AND1) node [midway, above, sloped] {$(\w+5)\cdot\s$};
        \draw [->, thick] (ANDin) -- (AND2)  node [midway, above, sloped] {$(\w+5)\cdot\s+s_k$};
    
        \draw[->, thick] (AND1) -- (XOR1) node [midway, above, sloped] {$(\w+4)\cdot\s$};
        \draw[->, thick] (AND2) -- (XOR2) node [midway, above, sloped] {$(\w+4)\cdot\s$};

        \draw[->, thick] (XOR1) -- (XORjoin) node [midway, above, sloped] {$(\w+2)\cdot\s$};
        \draw[->, thick] (XOR2) -- (XORjoin) node [midway, above, sloped] {$-(\w+2)\cdot\s$};

        \draw[->, dashed] (XORjoin) -- (XORout) node [midway, above, sloped] {$\w\cdot\s$};
    
        \draw[->,dashed] (inv) -- (ANDin) node [midway, above=0.3cm] {$\overbrace{2(M_{k}+6)}^{M_{k + 1}} \cdot \s$};
        
    \end{tikzpicture}
    \caption{A gadget $\mathcal{C}^-_{n,k}$ with $M_k=6(2^k-2)$, $S = 2n + 1$, and $s_k = n + 1 - k$. 
    The constraints of the $k$th of $n$ gadgets are shown:
    the weights of unary constraints are next to their variables and the weights of binary constraints are above the edges that specify their scope.
    The orientation of the arcs is displayed, showing the instance is oriented.
    The dotted edges and vertices illustrate the connection to the neighboring gadgets. 
    For the right boundary: there is no $0$th gadget and thus no constraint with scope $\{(1,6),(0,1)\}$.
    For the left boundary: both of $\mathcal{C}^\pm_{n,\leq m}$ have no constraint with scope $\{(m+1,6),(m,1)\}$ but $\mathcal{C}^+_{n,m}$ also changes the weight of the unary on $(m,1)$ to $c^+_{(m,1)} = c^-_{(m,1)} + c_{(m+1,6),(m,1))} = S$.
    }
    \label{fig:steepestGadget}
\end{figure}


Both the $\mathcal{C}^-_{n,k}$ and $\mathcal{C}^+_{n,k}$ gadgets have all six unary constraints and the same six binary constraints with scopes $\{(k,1),(k,2)\}$, $\{(k,2),(k,3)\}$, $\{(k,3),(k,6)\}$, $\{(k,1),(k,4)\}$, $\{(k,4),(k,5)\}$, $\{(k,5),(k,6)\}$.
Finally, we connect adjacent gadgets with a single binary constraint with scope $\{(k,6),(k-1,1)\}$.
The constraint graph of $\mathcal{C}^-_{n,k}$ along with the connections to the adjacent gadgets at $k+1$ and $k-1$ are shown in \cref{fig:steepestGadget}.
It is not hard to check based on the above definition that the constraint graphs of $\mathcal{C}^\pm_{n,\leq m}$ are sparse:
\begin{proposition}
The constraint graph of $\mathcal{C}^\pm_{n,\leq m}$ has maximum degree $3$ and pathwidth $2$.
\label{prop:pw2}
\end{proposition}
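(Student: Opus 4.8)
The plan is to prove the two claims separately, working purely from the combinatorial description of the constraint graph and not from the weights. Since $\mathcal{C}^+_{n,\leq m}$ and $\mathcal{C}^-_{n,\leq m}$ differ only in the unary weight on $(m,1)$, they share the same constraint graph, so it suffices to argue about a single graph $G = (V_{\leq m}, E)$.

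For the degree bound, I would first observe that the six binary scopes inside a gadget $\mathcal{C}^\pm_{n,k}$ form a single $6$-cycle $(k,1)$–$(k,2)$–$(k,3)$–$(k,6)$–$(k,5)$–$(k,4)$–$(k,1)$ (as drawn in \cref{fig:steepestGadget}), so every vertex of $V_k$ has exactly two neighbours inside its own gadget. The only remaining edges are the linking constraints with scope $\{(k,6),(k-1,1)\}$, one per consecutive pair of gadgets, and each adds $1$ to the degree of both $(k,6)$ and $(k-1,1)$. Hence $(k,1)$ has degree $3$ exactly when gadget $k+1$ exists (i.e. $k<m$) and degree $2$ otherwise; symmetrically $(k,6)$ has degree $3$ exactly when $k\geq 2$; and each of $(k,2),(k,3),(k,4),(k,5)$ always has degree $2$. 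So the maximum degree is $3$.

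For the pathwidth I would exhibit an explicit path decomposition of width $2$ and then argue optimality. The key local step is to decompose each gadget's $6$-cycle so that its two linking vertices $(k,1)$ and $(k,6)$ — which are antipodal on the cycle — sit at opposite ends of the local path. Concretely, the four size-$3$ bags $\{(k,1),(k,2),(k,4)\}$, $\{(k,2),(k,3),(k,4)\}$, $\{(k,3),(k,4),(k,5)\}$, and $\{(k,3),(k,5),(k,6)\}$ cover all six cycle edges and satisfy the interval property, with $(k,1)$ appearing only in the first bag and $(k,6)$ only in the last. I would then chain the gadgets in the order $k=m,m-1,\dots,1$ and insert a two-element transition bag $\{(k,6),(k-1,1)\}$ between the last bag of gadget $k$ and the first bag of gadget $k-1$; this covers every linking edge while keeping all bags of size at most $3$.

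The step I expect to require the most care is verifying the connectedness axiom across the gadget boundaries: I must confirm that each linking vertex occupies a contiguous block of bags. This works because $(k,6)$ appears only in the last bag of gadget $k$ and in the immediately following transition bag, while $(k-1,1)$ appears only in that transition bag and in the first bag of gadget $k-1$, so both intervals are contiguous; the four internal vertices of each gadget appear only within that gadget's four bags, where contiguity is immediate. This yields pathwidth at most $2$. Finally, since $G$ contains a $6$-cycle it is not a forest, and graphs of pathwidth at most $1$ are forests (caterpillar forests), so the pathwidth is at least $2$. Combining the two bounds gives pathwidth exactly $2$.
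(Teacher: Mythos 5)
Your proposal is correct and follows essentially the same approach as the paper: the degree argument is identical, and your four size-$3$ bags per gadget chained by transition bags $\{(k,6),(k-1,1)\}$ are exactly the paper's path decomposition. The only cosmetic difference is the lower bound, which the paper certifies by exhibiting $K_3$ as a minor (contracting three cycle edges) while you invoke that pathwidth-$1$ graphs are forests -- both arguments reduce to the graph containing a cycle, so they are interchangeable.
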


\begin{proof}
The constraint graph of each $\mathcal{C}^\pm_{n,k}$ is a cycle so every vertex has degree $2$.
To create $\mathcal{C}^\pm_{n,\leq m}$ we add a single edge between consecutive gadgets, this raises the maximum degree to $3$.
For the pathwidth:

\noindent ($\Rightarrow$) Path decomposition for $\mathcal{C}^\pm_{n,k}$ and adjacent variables:
$\{(k+1,6),(k,1)\}$, $\{(k,1),(k,2),(k,4)\}$, $\{(k,2),(k,3),(k,4)\}$, $\{(k,3),(k,4),(k,5)\}$, $\{(k,3),(k,5),(k,6)\}$, $\{(k,6),(k-1,1)\}$.

\noindent ($\Leftarrow$) Contracting $\{(k,1),(k,2)\}$, $\{(k,3),(k,6)\}$ and $\{(k,4),(k,5)\}$ shows $K_3$ is a minor of $\mathcal{C}^\pm_{n,k}$.
\end{proof}

We define the weights of the constraints sequentially using parameters $M_k = 6(2^k - 2)$, $S = 2n + 1$, and $s_k = n + 1 - k$.
Since $C^-_{n,k}$ and $C^+_{n,k}$ are the same except for the unary on $(k,1)$, we use $c$ for all the weights except for $c^-_{(k,1)}$ (\cref{eq:cminus}) and $c^+_{(k,1)}$ (\cref{eq:cplus}):

\begin{align}
c_{\{(k,6),(k-1,1)\}} & & = \quad& M_k S \\
c_{(k,6)} & = - (|c_{\{(k,6),(k-1,1)\}}| + S) & = \quad& -(M_k + 1)S \\
c_{\{(k,3),(k,6)\}} & = |c_{(k,6)}| + S & = \quad& (M_k + 2)S \\
c_{\{(k,5),(k,6)\}} & = -|c_{\{(k,3),(k,6)\}}| & = \quad& -(M_k + 2)S \\
c_{(k,3)} & = -(|c_{\{(k,3),(k,6)\}}| + S) & = \quad & -(M_k + 3)S \\
c_{\{(k,2),(k,3)\}} & = |c_{(k,3)}| + S & = \quad & (M_k + 4)S \\
c_{(k,2)} & = - (|c_{\{(k,2),(k,3)\}}| + s_k) & = \quad & -(M_k + 4)S - s_k \label{eq:smallStep1} \\
c_{\{(k,1),(k,2)\}} & = |c_{(k,2)}| + S - s_k & = \quad & (M_k + 5)S \\
c_{(k,5)} & = &  & - S \\
c_{\{(k,4),(k,5)\}} & = |c_{(k,5)} + c_{\{(k,5),(k,6)\}}| + S & = \quad & (M_k + 4)S \\
c_{(k,4)} & = -(|c_{\{(k,4),(k,5)\}}| + S) & = \quad & -(M_k + 5)S \\
c_{\{(k,1),(k,4)\}} & = |c_{(k,4)}| + s_k & = \quad & (M_k + 5)S + s_k \label{eq:smallStep2} \\
c^-_{(k,1)} & = -(|c_{\{(k,1),(k,2)\}} + c_{\{(k,1),(k,4)\}}| + S - s_k) & = \quad & -(2(M_k + 5) + 1)S \label{eq:cminus}\\
c_{\{(k + 1,6),(k,1)\}} & = |c_{(k,1)}| + S & = \quad & \underbrace{2(M_k + 6)}_{M_{k + 1}}S \\
c^+_{(k,1)} & = c^-_{(k,1)} + c_{\{(k + 1,6),(k,1)\}}  & = \quad & S \label{eq:cplus}
\end{align}

\noindent  The above weights are shown on the constraint graph in \cref{fig:steepestGadget}.
This structure of weights has three important features that ensure the \VCSP{} is oriented, that let us recurse on smaller $k$, and that help us control the steepest ascent.
We have set the above weights in such a way that the following three properties hold in $\mathcal{C}^-_{n,k}$: 
\begin{enumerate}[(a)]
\item all the unaries are negative, \label{point:all_negative} 
\item the magnitude of each variable's unary is greater than the binary constraint from that variable to a variable in the gadget with higher second index (or than the sum of the two binaries in the case of $|c^-_{(k,1)}| > c_{\{(k,1),(k,2)\}} + c_{\{(k,1),(k,4)\}}$)\label{point:outgoing}, and
\item the sum of the weights of any subset of `incoming' binaries on a variable (i.e., binary constraint to that variable from a variable in the gadget with lower second index) is either non-positive or greater than the magnitude of the unary (or the sum of the unary and any negative `outgoing' binaries in the case of $| c_{\{(k,4),(k,5)\}}| > |c_{(k,5)}|+ |c_{\{(k,5),(k,6)\}}|$).\label{point:incomming}
\end{enumerate}
These three properties ensure that the resulting \VCSP{} is oriented:

\begin{proposition}
    $\mathcal{C}^\pm_{n,k}$ and $\mathcal{C}^\pm_{n,\leq m}$ are oriented.
    \label{prop:oriented}
\end{proposition}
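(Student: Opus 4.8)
The plan is to produce an explicit orientation of the constraint graph and check it against the sign-dependence structure. I orient every intra-gadget edge from the endpoint with the smaller second index to the one with the larger, and every inter-gadget edge from $(k,6)$ to $(k-1,1)$; this is exactly the orientation drawn in \cref{fig:steepestGadget}. Call the neighbours of a variable $i$ reached along out-edges its \emph{outgoing} neighbours and the rest its \emph{incoming} neighbours; note this partition is globally consistent (each edge is outgoing from exactly one endpoint). By \cref{def:oriented} it suffices to show that for each edge the endpoint it is oriented \emph{away from} does not sign-depend on the other endpoint. I will prove the slightly stronger statement that for every variable $i$, the sign of $\nabla_i\mathcal{C}$ is unchanged when any outgoing neighbour is flipped. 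Pairs of non-adjacent variables need no attention: $\nabla_i\mathcal{C}$ only mentions the indices in $N_\mathcal{C}(i)$, so neither endpoint of a non-edge can sign-depend on the other.

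The core is a two-case dichotomy. Fix $i$ and any assignment to its incoming neighbours, and let $I=\sum_{\ell}c_{\{i,\ell\}}x_\ell$ be the resulting total incoming contribution (summed over incoming $\ell$). As the incoming assignment varies, $I$ ranges over exactly the subset-sums of the incoming binary weights, so property~(c) forces one of two cases. If $I\leq 0$, then even switching on every \emph{positive} outgoing binary gives $\nabla_i\mathcal{C}\leq c_i+I+W^{+}_i$, where $W^{+}_i$ is the sum of the positive outgoing weights; property~(a) gives $c_i<0$ and property~(b) gives $W^{+}_i<|c_i|$, so $c_i+W^{+}_i<0$ and hence $\nabla_i\mathcal{C}<I\leq 0$ for every outgoing assignment. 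If instead $I>|c_i|+W^{-}_i$, where $W^{-}_i$ is the sum of magnitudes of the negative outgoing weights, then even switching on every negative outgoing binary gives $\nabla_i\mathcal{C}\geq c_i+I-W^{-}_i>0$ for every outgoing assignment. In both cases $\nabla_i\mathcal{C}$ retains a constant nonzero sign as the outgoing neighbours vary, so $i$ does not sign-depend on any outgoing neighbour. Applied to every variable this gives the condition of \cref{def:oriented} on every edge, so each gadget $\mathcal{C}^\pm_{n,k}$ in isolation and the whole path $\mathcal{C}^\pm_{n,\leq m}$ are oriented.

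It remains to dispatch the boundary variables, which is where the two families differ. The rightmost variable $(1,6)$ has no outgoing neighbour, so nothing must be checked for it. The leftmost variable $(m,1)$ has no incoming neighbour, so the only available value is $I=0$: in $\mathcal{C}^-_{n,\leq m}$ this is the first case (with the unary of \cref{eq:cminus}, $\nabla_{(m,1)}\mathcal{C}^-<0$ always, so $(m,1)$ is sign-independent), while in $\mathcal{C}^+_{n,\leq m}$ the raised unary $c^+_{(m,1)}=S>0$ of \cref{eq:cplus} gives $\nabla_{(m,1)}\mathcal{C}^+=S+(\text{nonnegative outgoing})>0$ always. The latter is precisely the second case with the enlarged unary playing the role of the incoming contribution, matching $\mathcal{C}^-$ with the absent linking variable pinned to $1$; in either family $(m,1)$ is sign-independent and affects no other $\nabla_j\mathcal{C}$, since only its own unary changed.

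The hard part is not the monotonicity argument above but confirming that the weights really do satisfy properties~(a)–(c), and in particular that property~(c) must quantify over \emph{all subsets} of incoming weights rather than individual ones. This matters because $(k,6)$ carries both a positive incoming binary (from $(k,3)$) and a negative one (from $(k,5)$), so $I$ can be assembled from several active incoming neighbours at once, and the dichotomy only closes if every such subset-sum is either non-positive or large enough to clear $|c_i|+W^-_i$. Checking that the negative binary $c_{\{(k,5),(k,6)\}}$ and the small offsets $s_k$ do not collapse these gaps is the one genuinely delicate bookkeeping step; once (a)–(c) are granted, the two-case argument is immediate.
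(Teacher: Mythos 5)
Your proof is correct, but it takes a genuinely different route from the paper's. The paper proves \cref{prop:oriented} by exhaustive computation: \cref{tab:oriented} lists $\nabla_{(k,h)}\mathcal{C}^\pm_{n,k}$ for every variable and all four assignments to its two neighbours, the sign-dependences are read directly off the table, and the inter-gadget edge is handled by noting that adding $M_kS$ to the last row cannot flip its signs. You instead upgrade the paper's informal remark that properties~(\ref{point:all_negative})--(\ref{point:incomming}) ``ensure that the resulting \VCSP{} is oriented'' into the actual argument: your two-case dichotomy (incoming contribution $\leq 0$ forces $\nabla_i<0$ for every outgoing assignment because the negative unary dominates the positive outgoing weights; incoming contribution exceeding the unary magnitude plus the negative outgoing weights forces $\nabla_i>0$) shows that no variable sign-depends on any outgoing neighbour, which settles the condition of \cref{def:oriented} edge by edge. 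I checked your dichotomy against the actual weights for all six variable types, including the two you flag as delicate -- $(k,5)$, which needs the parenthetical of property~(\ref{point:incomming}), and $(k,6)$, whose two incoming binaries force the quantification over subsets -- and it closes in every case, as does your separate treatment of the boundary variable $(m,1)$ in $\mathcal{C}^+_{n,\leq m}$. Your approach buys generality and brevity: any weight assignment satisfying (a)--(c) is oriented, with no $24$-entry table needed; the paper's table buys concrete values that are reused as the computational backbone of \cref{fig:landscapes} and the proof of \cref{thm:steepest-ascent}. Two caveats. First, properties~(\ref{point:outgoing}) and (\ref{point:incomming}) as literally worded speak only of within-gadget edges, so your appeal to them for the edge $\{(k,6),(k-1,1)\}$ additionally needs the inequalities $|c_{(k,6)}|=(M_k+1)S>M_kS$ and $c_{\{(k+1,6),(k,1)\}}=(2M_k+12)S>(2M_k+11)S=|c^-_{(k,1)}|$; both are built into the construction with a gap of exactly $S$, but they should be stated rather than subsumed under (b) and (c). Second, you defer the verification that the weights satisfy (a)--(c); since the paper asserts these properties as part of the construction this is legitimate, but a fully self-contained proof would include that (routine) check.
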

\begin{table}[tbp]
    \begin{center}
        \begin{tabular}{|l|l|l|c||c|c|c|c|}
             \hline
              \multicolumn{4}{ |c |}{}  &\multicolumn{4}{ c | }{( $x_{(k,i)}$ , $x_{(k,j)}$ )}\\
             \hline
             $h$ & $i$ & $j$ & $\nabla_{(k,h)}^\pm$ & $(0,0)$ & $(0,1)$ & $(1,0)$ & $(1,1)$\\
             \hline
             \hline
             \multirow{ 2}{*}{$1$} &\multirow{ 2}{*}{$4$} & \multirow{ 2}{*}{$2$} & $\nabla_{(k,1)}^+$& \cellcolor{posCol} $\scriptstyle S$ & \cellcolor{posCol} $\scriptstyle (M_k+6)S+s_k$ & \cellcolor{posCol} $\scriptstyle (M_k+6)S$ & \cellcolor{posCol} $\scriptstyle (2M_k+11)S+s_k$\\
             \cline{4-4}
             
              & & &$\nabla_{(k,1)}^-$ & \cellcolor{negCol} $\scriptstyle -(2(M_k+5)+1)S$ & \cellcolor{negCol} $\scriptstyle -(M_k+6)S+s_k$ & \cellcolor{negCol} $\scriptstyle -(M_k+6)S$ & \cellcolor{negCol} $\scriptstyle -S+s_k$\\
             \hline
             
             $2$ & $1$ & $3$ &$\nabla_{(k,2)}^\pm$ & \cellcolor{negCol} $\scriptstyle -(M_k+4)S-s_k$ & \cellcolor{negCol} $\scriptstyle -s_k$ & \cellcolor{posCol} $\scriptstyle S-s_k$ & \cellcolor{posCol} $\scriptstyle (M_k+5)S-s_k$ \\
             \hline
             
             $3$ & $2$ & $6$& $\nabla_{(k,3)}^\pm$ & \cellcolor{negCol} $\scriptstyle -(M_k+3)S$ & \cellcolor{negCol} $\scriptstyle -S$ & \cellcolor{posCol} $\scriptstyle S$ & \cellcolor{posCol}$\scriptstyle (M_k+3)S$ \\
             \hline
             
             $4$ & $1$ & $5$ & $\nabla_{(k,4)}^\pm$ & \cellcolor{negCol} $\scriptstyle -(M_k+5)S$ & \cellcolor{negCol} $\scriptstyle -S$ & \cellcolor{posCol} $\scriptstyle s_k$ & \cellcolor{posCol} $\scriptstyle (M_k+4)S+s_k$\\
             \hline
             
             $5$ & $4$ & $6$ &$\nabla_{(k,5)}^\pm$ & \cellcolor{negCol} $\scriptstyle -S$ & \cellcolor{negCol} $\scriptstyle -(M_k+3)S$ & \cellcolor{posCol} $\scriptstyle (M_k+3)S$ & \cellcolor{posCol} $\scriptstyle S$\\
             \hline
             
             $6$ & $3$ & $5$ &$\nabla_{(k,6)}^\pm$ & \cellcolor{negCol} $\scriptstyle -(M_k+1)S$ & \cellcolor{posCol} $\scriptstyle S$ & \cellcolor{negCol} $\scriptstyle -(2M_k+3)S$ & \cellcolor{negCol} $\scriptstyle -(M_k+1)S$\\
             \hline
             
        \end{tabular}
    \end{center}
    \caption{Fitness change $\nabla_{(k,h)}\mathcal{C}^\pm_{n,k}$ incurred by flipping the assignment of variable $(k,h)$ from $x_{(k,h)}=0$ to $x_{(k,h)}=1$ given the assignments of its two neighbors $x_{(k,i)}$ and $x_{(k,j)}$ in $\mathcal{C}^\pm_{n,k}$. We abbreviate $\nabla_{(k,h)}\mathcal{C}^\pm_{n,k}$ as $\nabla_{(k,h)}^\pm$. 
    The dark gray cells highlight negative changes, and light gray cells highlight positive changes. 
    }
    \label{tab:oriented}
\end{table}

\begin{proof} 
    First we prove that $\mathcal{C}^\pm_{n,k}$ are oriented as in \cref{fig:steepestGadget}.
    Note that $\mathcal{C}^\pm_{n,k}$ are cycles and every vertex has degree $2$.
    We will denote by $(k,i)$ and $(k,j)$ the two neighbors of an arbitrary variable $(k,h)\in V_k$. 
    \cref{tab:oriented} shows the fitness change $\nabla_{(k,h)}^\pm\mathcal{C}(x_{(k,i)}x_{(k,j)})$ for all possible assignments to $x_{(k,i)}$ and $x_{(k,j)}$.
    The cells of \cref{tab:oriented} are colored according to the sign of $\nabla_{(k,h)}^\pm\mathcal{C}$.
    It suffices to check \cref{tab:oriented} against \cref{def:oriented}:
    The sign of $\nabla_{(k,1)}^+$ is always positive, and the sign of $\nabla_{(k,1)}^-$ is always negative, regardless of the assignment to $(k,2)$ and $(k,4)$, so $(k,1)$ does not sign depend on either $(k,2)$ or $(k,4)$ in $\nabla_{(k,1)}^\pm$. 
    
    On the other hand, for the rows where $h=2,3,4$ and $5$ the two columns where $x_{(k,i)}=0$ are negative whilst the two columns corresponding to $x_{(k,i)}=1$ are positive.
    This means that $(k,2),(k,3),(k,4)$ and $(k,5)$ sign-depend, respectively, on $(k,1),(k,2),(k,1)$ and $(k,4)$; but do not sign-depend, respectively, on $(k,3),(k,6),(k,5)$ and $(k,6)$.
    Finally, from the fact that the row with $h=6$ has different signs on the two columns where $x_{(k,3)}=0$, and also different signs on the two columns where $x_{(k,5)}=1$ we can see that $(k,6)$ sign-depends on $(k,3)$ and $(k,5)$ in $\nabla_{(k,1)}^\pm$.
    
    Second, to show that $\mathcal{C}^\pm_{n,\leq m}$ are oriented, it suffices to show that for $1\leq k<m$ the preferred assignment to $x_{(k+1,6)}$ is independent of the assignment to $x_{(k,1)}$.
    This is equivalent to showing that the sign of the last row in \cref{tab:oriented} does not change if we add $M_kS$ to each column, which is obviously true.
    This is sufficient to show the statement of the proposition.
    
    To show the additional property that the orientation of the edge $\{x_{(k+1,6)},x_{(k,1)}\}$ is as it appears in \cref{fig:steepestGadget}, we must show that $(k,1)$ sign-depends on the assignment to $(k+1,6)$.
    But this can be seen from the fact that the first row of our table is equivalent to having $x_{(k+1,6)}=1$  in the background, and the second row is equivalent to having $x_{(k+1,6)}=0$.
    Since the first and second rows have different signs the sign-dependence follows. 
\end{proof}

\noindent The weight of $c^+_{(k,1)}$ in  \cref{eq:cplus} is set so that we have the next two properties: 
\begin{enumerate}[(a)]
\setcounter{enumi}{3}
\item If we fix $x_{(k,6)} = 0$ then the sublandscape spanned by $V_{\leq k - 1}$ is the same as the landscape implemented by $\mathcal{C}^-_{n,\leq k - 1}$, and \label{point:zero_out}
\item if we fix $x_{(k,6)} = 1$ then the sublandscape spanned by $V_{\leq k - 1}$ is the same as the landscape implemented by $\mathcal{C}^+_{n,\leq k - 1}$. \label{point:one_out}
\end{enumerate}
This allows us to analyze ascents on $\mathcal{C}^\pm_{n,\leq m}$ inductively because when $x_{(m,6)}$ is fixed to $1$ or $0$ we can use the analysis from our inductive hypothesis of $\mathcal{C}^+_{n,\leq m - 1}$ or $\mathcal{C}^-_{n,\leq m - 1}$, respectively.
It also makes it very easy to check for the peaks of our landscapes:

\begin{proposition}
The semismooth fitness landscapes of $\mathcal{C}^-_{n,k}$, $\mathcal{C}^-_{n,\leq m}$, $\mathcal{C}^+_{n,k}$, and $\mathcal{C}^+_{n,\leq m}$  have their unique fitness peaks at $x^*(\mathcal{C}^-_{n,k}) = 000000$, $x^*(\mathcal{C}^-_{n,\leq m}) = 0^{6m}$, $x^*(\mathcal{C}^+_{n,k}) = 111110$, and  $x^*(\mathcal{C}^+_{n,\leq m}) = x^*(\mathcal{C}^+_{n,m})0^{6(m - 1)} =  1111100^{6(m - 1)}$, respectively.\footnote{
For convenience, when we specify an assignment $x$ the variables are ordered from left to right by decreasing first index and, to break ties, by increasing second index. 
For example, for $\mathcal{C}^+_{n,\leq m}$ the assignment $x=0100010^{6(m-1)}$ sets $x_{(m,2)}=1,x_{(m,6)}=1$ and all other variables to $0$.
}
\end{proposition}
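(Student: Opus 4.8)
The plan is to exploit single-peakedness to reduce each of the four claims to a purely local check. Since $\mathcal{C}^\pm_{n,k}$ and $\mathcal{C}^\pm_{n,\leq m}$ are oriented by \cref{prop:oriented}, their fitness landscapes are single-peaked, so their unique local peak coincides with their unique global peak. It therefore suffices to verify, for each claimed assignment, that no single-variable flip strictly increases fitness; single-peakedness then guarantees that this local peak is the (unique) peak $x^*$.

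For the two single-gadget instances I would read the required signs directly off \cref{tab:oriented}. For $\mathcal{C}^-_{n,k}$ at $000000$ every variable and every neighbour is $0$, so each candidate $0\to 1$ flip has fitness change given by the $(0,0)$ column; all six entries there are negative, so $000000$ is a local peak. For $\mathcal{C}^+_{n,k}$ at $111110$ I would, for each variable $(k,h)$, select the column of \cref{tab:oriented} determined by the values of its two neighbours in $111110$, and check two things: the five variables set to $1$ (indices $1,\dots,5$) have a strictly positive entry, so the $1\to 0$ flip is strictly decreasing; and the single variable $(k,6)$ set to $0$ has a strictly negative entry (column $(1,1)$, equal to $-(M_k+1)S$), so the $0\to 1$ flip is strictly decreasing. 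Inspecting the relevant $(1,0)$ and $(1,1)$ entries confirms all six inequalities (using $s_k \leq S$ where a small correction term appears).

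For the two multi-gadget instances I would argue by induction on $m$, with the single-gadget cases serving as the base case $m=1$. The only subtlety is the linking constraint $\{(k,6),(k-1,1)\}$, which is handled by properties \cref{point:zero_out} and \cref{point:one_out}. For $\mathcal{C}^-_{n,\leq m}=0^{6m}$ I split the variables into the top gadget $V_m$ and the remainder $V_{\leq m-1}$: since $x_{(m,6)}=0$, property \cref{point:zero_out} identifies the sublandscape on $V_{\leq m-1}$ with $\mathcal{C}^-_{n,\leq m-1}$, whose peak is $0^{6(m-1)}$ by the inductive hypothesis, so no flip inside $V_{\leq m-1}$ helps; and because $x_{(m-1,1)}=0$ the linking term vanishes, so the flips inside $V_m$ reduce exactly to the single-gadget check for $\mathcal{C}^-_{n,m}$ already performed. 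For $\mathcal{C}^+_{n,\leq m}=1111100^{6(m-1)}$ the same split applies: the top gadget is the isolated $\mathcal{C}^+_{n,m}$ (using the boundary convention $c^+_{(m,1)}=S$ and the absence of an $(m+1,6)$ neighbour), so $111110$ is locally optimal there, while $x_{(m,6)}=0$ again invokes \cref{point:zero_out} to identify the lower sublandscape with $\mathcal{C}^-_{n,\leq m-1}$ at its peak $0^{6(m-1)}$.

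I expect the main obstacle to be bookkeeping rather than any genuine difficulty: one must carefully match each variable to the correct column of \cref{tab:oriented} and, in the multi-gadget case, confirm that flipping a linking variable $(k,6)$ whose neighbourhood straddles two gadgets cannot increase fitness. This last point is exactly where \cref{point:zero_out} and \cref{point:one_out} do the work, since they let me treat each gadget's contribution independently once the value of the downward-linking variable is fixed, so that every local check collapses back to a single column of \cref{tab:oriented}.
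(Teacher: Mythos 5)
Your proof is correct, and it rests on the same two pillars as the paper's (orientedness/semismoothness from \cref{prop:oriented} to promote a local peak to the unique global peak, and property~(\ref{point:zero_out}) to decouple the gadgets), but the execution is organized differently. The paper dispatches $\mathcal{C}^-_{n,k}$ \emph{and} $\mathcal{C}^-_{n,\leq m}$ in one stroke -- all unaries are negative (property~(\ref{point:all_negative})), so the all-zero assignment is a local peak, hence the unique peak -- with no table lookup and no induction; you instead verify the all-zero case via the $(0,0)$ column of \cref{tab:oriented} and run an explicit induction on $m$. For $\mathcal{C}^+_{n,k}$ the paper is \emph{constructive}: it propagates preferred assignments along the topological order (the sign-independent $x_{(k,1)}$ has positive unary $S$, hence prefers $1$; then $x_{(k,2)},x_{(k,4)}$ prefer $1$; then $x_{(k,3)},x_{(k,5)}$; finally $x_{(k,6)}$ prefers $0$), which \emph{derives} $111110$ rather than verifying it; you check the six gradient signs at the claimed assignment column-by-column. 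Finally, for $\mathcal{C}^+_{n,\leq m}$ the paper needs only a single application of property~(\ref{point:zero_out}), whereas your induction additionally makes explicit the (correct and worthwhile) observation that $x_{(m-1,1)}=0$ annihilates the linking term, so flips inside $V_m$ behave as in the isolated gadget. Your approach is more mechanical and self-contained; the paper's is shorter and explains \emph{why} the peak is what it is. One small imprecision: in your closing paragraph you credit properties~(\ref{point:zero_out}) and (\ref{point:one_out}) with handling the flip of the linking variable $(k,6)$ itself, but those properties only govern the sublandscape on $V_{\leq k-1}$ once $x_{(k,6)}$ is \emph{fixed}; the flip of $(k,6)$ is actually covered by your earlier vanishing-linking-term argument, so this is a misattribution rather than a gap.
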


\begin{proof}
By property~(\ref{point:all_negative}), all the unaries in $\mathcal{C}^-_{n,k}$ and $\mathcal{C}^-_{n,\leq m}$ are negative, so the all zero assignment is a local peak.
Since the landscapes are semismooth from \cref{prop:oriented}, this is also the unique global peak.

For $\mathcal{C}^+_{n,k}$, $x_{(k,1)}$ is conditionally-independent of all other variables and has $c^+_{(k,1)} = S > 0$, so the preferred assignment is $x_{(k,1)} = 1$.
With $x_{(k,1)}$ set to $1$, the preferred assignments for $x_{(k,2)}$ and $x_{(k,4)}$ are also $1$; and based on those, the preferred assignments for $x_{(k,3)}$ and $x_{(k,5)}$ are also $1$.
This leaves $x_{(k,6)}$ which, conditional on $x_{(k,3)} = x_{(k,5)} = 1$, prefers $x_{(k,6)} = 0$.
Overall, this gives $x^*(\mathcal{C}^+_{n,k}) = 111110$.

Since $x_{(k,6)} = 0$, property~(\ref{point:zero_out}) implies that $x^*(\mathcal{C}^+_{n,\leq m}) = x^*(\mathcal{C}^+_{n,m})x^*(\mathcal{C}^-_{n,\leq m - 1})$. 
\end{proof}

\noindent Finally, the difference in increase of \cref{eq:smallStep1,eq:smallStep2} from the other weights ensures that:
\begin{enumerate}[(a)]
\setcounter{enumi}{5}
\item steps from assignments $0\underbar{1}1x_4x_5x_6$ to $0\boldsymbol{0}1x_4x_5x_6$ and from $1x_2x_3\underline{0}0x_6$ to $1x_2x_3\boldsymbol{1}0x_6$ increase fitness by exactly $s_k \leq n$, and 
\item all other fitness increasing steps increase fitness by at least $S - s_k > n$.
\end{enumerate}
As we will see in the next section, these ``small'' steps allow us to control in what order each block $V_k$ of variables appears in the steps of the steepest ascent.
Combined with properties~(\ref{point:zero_out}) and (\ref{point:one_out}) this lets us show the exponential steepest ascent by induction on $m$.

\section{Exponential steepest ascent in the landscape of $\mathcal{C}^\pm_{n,\leq m}$}
\label{sec:ascent}

We can now show that it takes a large number of steps to go from the assignment $x^*(\mathcal{C}_{n,\leq m}^-) = 0^{6m}$ to the peak $x^*(\mathcal{C}_{n,\leq m}^+) = 1111100^{6(m - 1)}$ in the semismooth fitness landscape implemented by $\mathcal{C}_{n,\leq m}^+$ (and vice versa for the landscape implemented by $\mathcal{C}_{n,\leq m}^-$):

\begin{theorem}\label{thm:steepest-ascent}
    Both the steepest ascents starting 
    from $x^*(\mathcal{C}_{n,\leq m}^-) = 0^{6m}$ in the fitness landscape of $\mathcal{C}_{n,\leq m}^+$ and
    from $x^*(\mathcal{C}_{n,\leq m}^+) = 111110 0^{6(m-1)}$ in the fitness landscape of $\mathcal{C}_{n,\leq m}^-$  
    have length $7(2^m-1)$, where each step increases fitness by at least $s_m$.
\end{theorem}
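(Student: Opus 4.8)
The plan is to prove the two halves of \cref{thm:steepest-ascent} simultaneously by induction on the number of gadgets $m$, since the inductive step for the $\mathcal{C}^+_{n,\leq m}$ ascent will consume \emph{both} induction hypotheses -- the $\mathcal{C}^+_{n,\leq m-1}$ and the $\mathcal{C}^-_{n,\leq m-1}$ ascents -- and symmetrically for $\mathcal{C}^-_{n,\leq m}$. Throughout I would track the assignment of the top gadget $V_m$ separately from that of the sub-instance $V_{\leq m-1}$, reading each single-variable difference $\nabla_{(m,h)}$ off the gadget weights and remembering the linking constraint: $\nabla_{(m,6)}$ carries the extra term $M_mS\,x_{(m-1,1)}$ and, dually, $\nabla_{(m-1,1)}$ carries $M_mS\,x_{(m,6)}$. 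Because the landscapes are single-peaked (\cref{prop:oriented}), at each assignment there is a well-defined steepest move, so the ascent is forced and it suffices to identify that move at every step.

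For the base case $m=1$ I would run the steepest ascent by hand. From $0^6$ in $\mathcal{C}^+_{n,1}$ the forced sequence of flips is $(1,1),(1,2),(1,3),(1,6),(1,4),(1,5),(1,6)$: the arm through $(1,2)$ and $(1,3)$ fills first, then $(1,6)$ flips up (its gain is $S$ once $x_{(1,3)}=1$ and $x_{(1,5)}=0$), then the arm through $(1,4)$ and $(1,5)$ fills, and finally $(1,6)$ flips back down, terminating at $x^*(\mathcal{C}^+_{n,1})=111110$. This is $7=7(2^1-1)$ steps whose gains are each $S$, $S-s_1$, or the single small step $s_1$, hence all at least $s_1$. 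The $\mathcal{C}^-_{n,1}$ ascent from $111110$ to $0^6$ is the mirror image and also takes $7$ steps.

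For the inductive step ($m\ge 2$) I would show the $\mathcal{C}^+_{n,\leq m}$ ascent splits into four phases. In the first phase ($4$ steps) the top gadget flips $(m,1),(m,2),(m,3)$ and then $(m,6)$: the arm through $(m,2)$ and $(m,3)$ is taken first because its gains ($S-s_m$, then $S$) beat the competing gain $s_m$ of $(m,4)$, and once $x_{(m,3)}=1,x_{(m,5)}=0$ the linking variable $(m,6)$ has the steepest gain $S$. By property~(\ref{point:one_out}), fixing $x_{(m,6)}=1$ makes the sublandscape on $V_{\leq m-1}$ identical to $\mathcal{C}^+_{n,\leq m-1}$; as $V_{\leq m-1}$ is still $0^{6(m-1)}$, the second phase is exactly the induction-hypothesis ascent of $\mathcal{C}^+_{n,\leq m-1}$, of length $7(2^{m-1}-1)$. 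In the third phase ($3$ steps) the top gadget resumes with the small step $(m,4)$ of gain $s_m$, then $(m,5)$, then $(m,6)$ back down. By property~(\ref{point:zero_out}), fixing $x_{(m,6)}=0$ turns the sublandscape into $\mathcal{C}^-_{n,\leq m-1}$, now sitting at $x^*(\mathcal{C}^+_{n,\leq m-1})=1111100^{6(m-2)}$, so the fourth phase is the induction-hypothesis descent of $\mathcal{C}^-_{n,\leq m-1}$, again of length $7(2^{m-1}-1)$. Summing gives $4+7(2^{m-1}-1)+3+7(2^{m-1}-1)=7(2^m-1)$, and the $\mathcal{C}^-_{n,\leq m}$ ascent follows from the mirror-image four-phase argument.

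The crux, and the part I expect to be most delicate, is showing that the second and fourth phases run to completion without the top gadget interrupting them at any of the exponentially many intermediate assignments. During the second phase the only enabled top move is the pending flip of $(m,4)$, whose gain is exactly $s_m$, whereas by the induction hypothesis every step of the sub-instance has gain at least $s_{m-1}=s_m+1$; since $s_k=n+1-k$ is strictly decreasing, the deeper gadget is always strictly steeper and is taken first. I would also check that $(m,6)$ never flips back prematurely -- its downward gain stays negative because $\nabla_{(m,6)}=S+M_mS\,x_{(m-1,1)}>0$ while $x_{(m,3)}=1,x_{(m,5)}=0$ -- and, dually, that no top move becomes enabled during the fourth phase once $x_{(m-1,1)}$ has returned to $0$. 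Every one of these comparisons reduces to the last two properties of the construction: the small flips of $(m,2)$ and $(m,4)$ change fitness by precisely $s_m\le n$ (from the $s_k$ offsets in \cref{eq:smallStep1,eq:smallStep2}), every other top flip changes it by at least $S-s_m=n+m>n$, and, by induction, every sub-instance flip changes it by at least $s_{m-1}>s_m$. Pushing this gain bookkeeping through to force the four-phase order at \emph{every} step, not merely at the phase boundaries, is the main work; the clause that each step raises fitness by at least $s_m$ then follows at once.
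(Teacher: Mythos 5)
Your proposal is correct and follows essentially the same argument as the paper's proof: the same simultaneous induction on $m$, the same four-phase decomposition ($4$ top-gadget steps, the $\mathcal{C}^+_{n,\leq m-1}$ sub-ascent, $3$ more top steps beginning with the small $s_m$-step on $(m,4)$, then the $\mathcal{C}^-_{n,\leq m-1}$ sub-ascent), and the same recurrence $T_m = 7 + 2T_{m-1}$ solved as $7(2^m-1)$. The gain bookkeeping you identify as the crux -- every sub-instance step having gain $s_{m-1} > s_m$ while the only enabled top move is the pending $(m,4)$ flip of gain exactly $s_m$, and $(m,6)$ never flipping back prematurely -- is precisely how the paper forces the phase order.
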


\begin{figure}
    \centering
    \begin{subfigure}[b]{0.49\textwidth}
    \scalebox{0.54}{

\begin{tikzpicture}[> = stealth, every path/.append style = {
        arrows = ->,
    }, steep/.style ={very thick}]
    \def \xDist {2cm};
    
    \node (b1) {$\underline{\mathbf{0}}00000$};
    
    \node (b2) [below = of b1] {$1\underline{\mathbf{0}}0\underline{0}00$};
    
    \node (b3) [below left= of b2] {$11\underline{\mathbf{0}}0\underline{0}0$};
    \node (b4) [below right= of b2] {$1\underline{0}01\underline{0}0$};
    
    \node (b5) [below left= of b3] {$111\underline{0}0\underline{\mathbf{0}}$};
    \node (b6) [below right= of b3] {$11\underline{0}1\underline{0}0$};
    \node (b7) [below right= of b4] {$100110$};
    
    \node (b8) [below left= of b5] {$111\underline{\mathbf{0}}01$};
    \node (b9) [below left = of b6] {$1111\underline{0}0$};
    \node (b10) [below left= of b7] {$11\underline{0}110$};
    
    \node (b11) [below right = of b8] {$1111\underline{\mathbf{0}}1$};
    
    \node (b12) [below right = of b11] {$11111\underline{\mathbf{1}}$};

    \node (b13) [below right= of b12] {$111110$};

    \path (b1) edge[steep] node [midway, right, fill=white] {$S$} (b2);
    \path (b1) edge[steep] node [midway, left, fill=white] {\Circled{1}} (b2);
    
    \path (b2) edge[steep] node [midway, right, fill=white] {$S-s_k$}  (b3);
    \path (b2) edge[steep] node [midway, left, fill=white] {\Circled{2}}  (b3);
    \path (b2) edge[red] node [midway, right, fill=white] {$s_k$} (b4);
    
    \path (b3) edge[steep] node [midway, right, fill=white] {$S$} (b5);
    \path (b3) edge[steep] node [midway, left, fill=white] {\Circled{3}} (b5);
    \path (b3) edge[red] node [midway, right, fill=white] {$s_k$} (b6);
    \path (b4) edge node [midway, right, fill=white] {$S-s_k$} (b6);
    \path (b4) edge node [midway, right, fill=white] {$(M_k+3)S$} (b7);
    
    \path (b5) edge[steep] node [midway, right, fill=white] {$\geq S$} (b8);
    \path (b5) edge[steep] node [midway, left, fill=white] {\Circled{4}} (b8);
    \path (b5) edge[red] node [midway, right, fill=white] {$s_k$} (b9);
    \path (b6) edge node [midway, right, fill=white] {$S$} (b9);
    \path (b6) edge node [midway, 
    fill=white] {$(M_k+3)S$}  (b10);
    \path (b7) edge node [midway, right, fill=white] {$S-s_k$} (b10);

    \path (b8) edge node [midway, left, fill=white] {\Circled{5}\;\;} (b11);
    \path (b8) edge[steep, red] node [midway, right, fill=white] {$s_k$} (b11);
    \path (b9) edge node [midway, 
    fill=white] {$(M_k+3)S$}  (b13);
    \path (b10) edge node [midway, right, fill=white] {$S$} (b13);

    \path (b11) edge[steep] node [midway, right, fill=white] {$S$} (b12);
    \path (b11) edge[steep] node [midway, left, fill=white] {\Circled{6}} (b12);
    
    \path (b12) edge[steep] node [midway, right, fill=white] {$\geq S$}(b13);
    \path (b12) edge[steep] node [midway, left, fill=white] {\Circled{7}}(b13);
\end{tikzpicture}

}
    \caption{Partial fitness landscape of $\mathcal{C}^+_{n,k}$}
    \label{fig:landscape-zeros}
    \end{subfigure} \hfill \begin{subfigure}[b]{0.49\textwidth}
    \scalebox{0.54}{

\begin{tikzpicture}[> = stealth, every path/.append style = {
        arrows = ->,
    }, steep/.style ={very thick}]
    \def \xDist {2cm};
    
    \node (b1) {$\underline{\mathbf{1}}11110$};
    
    \node (b2) [below = of b1] {$0\underline{1}1\underline{\mathbf{1}}10$};
    
    \node (b3) [below right= of b2] {$0\underline{1}10\underline{\mathbf{1}}0$};
    \node (b4) [below left= of b2] {$00\underline{11}10$};
    
    \node (b5) [below right= of b3] {$0\underline{1}100\underline{\mathbf{0}}$};
    \node (b6) [below left= of b3] {$00\underline{1}0\underline{1}0$};
    \node (b7) [below left= of b4] {$000\underline{1}10$};
    
    \node (b8) [below right= of b5] {$0\underline{\mathbf{1}}1001$};
    \node (b9) [below right = of b6] {$00\underline{1}000$};
    \node (b10) [below right= of b7] {$0000\underline{1}0$};
    
    \node (b11) [below left = of b8] {$00\underline{\mathbf{1}}001$};
    
    \node (b12) [below left = of b11] {$00000\underline{\mathbf{1}}$};

    \node (b13) [below left= of b12] {$000000$};

    \path (b1) edge[steep] node [midway, right, fill=white] {$S-s_k$} (b2);
    \path (b1) edge[steep] node [midway, left, fill=white] {\Circled{1}} (b2);
    
    \path (b2) edge[steep] node [midway, right, fill=white] {$S$}  (b3);
    \path (b2) edge[steep] node [midway, left, fill=white] {\Circled{2}}  (b3);
    \path (b2) edge[red] node [midway, right, fill=white] {$s_k$} (b4);
    
    \path (b3) edge[steep] node [midway, right, fill=white] {$S$} (b5);
    \path (b3) edge[steep] node [midway, left, fill=white] {\Circled{3}} (b5);
    \path (b3) edge[red] node [midway, right, fill=white] {$s_k$} (b6);
    \path (b4) edge node [midway, right, fill=white] {$S$} (b6);
    \path (b4) edge node [midway, 
    fill=white] {$(M_k+3)S$} (b7);
    
    \path (b5) edge[steep] node [midway, right, fill=white] {$\geq S$} (b8);
    \path (b5) edge[steep] node [midway, left, fill=white] {\Circled{4}} (b8);
    \path (b5) edge[red] node [midway, right, fill=white] {$s_k$} (b9);
    \path (b6) edge node [midway, right, fill=white] {$S$} (b9);
    \path (b6) edge node [midway, 
    fill=white] {$(M_k+3)S$}  (b10);
    \path (b7) edge node [midway, left, fill=white] {$S$}  (b10);

    \path (b8) edge node [midway, left, fill=white] {\Circled{5}\;\;} (b11);
    \path (b8) edge[steep, red] node [midway, right, fill=white] {$s_k$} (b11);
    \path (b9) edge node [midway, right, fill=white] {$(M_k+3)S$}  (b13);
    \path (b10) edge node [midway, left, fill=white] {$S$} (b13);

    \path (b11) edge[steep] node [midway, right, fill=white] {$S$} (b12);
    \path (b11) edge[steep] node [midway, left, fill=white] {\Circled{6}} (b12);
    
    \path (b12) edge[steep] node [midway, right, fill=white] {$\geq S$}(b13);
    \path (b12) edge[steep] node [midway, left, fill=white] {\Circled{7}}(b13);
\end{tikzpicture}

}
    \caption{Partial fitness landscape of $\mathcal{C}^-_{n,k}$}
    \label{fig:landscape-ones}
    \end{subfigure}
    \caption{All ascents from $000000$ in the fitness landscape of $\mathcal{C}^+_{n,k}$ \textbf{(a)} and from $111110$ in the fitness landscape of $\mathcal{C}^-_{n,k}$ \textbf{(b)}. 
    The increase in fitness of each step is shown on the edges, with small increments ($s_k$) highlighted in red.
    The steepest ascent is shown in bold. 
    For emphasis, bits that lead to a fitness increase when flipped are underlined, and the bit flipped by steepest ascent is bolded and numbered in the same way as in \cref{eq:firstSteps,eq:firstRec,eq:secondSteps,eq:secondRec} and \cref{eq:firstSteps_minus,eq:firstRec_minus,eq:secondSteps_minus,eq:secondRec_minus}.}
    \label{fig:landscapes}
\end{figure}


\begin{proof}
Our proof is by induction on the number of gadgets $m$.
We will show that by adding the gadget $\mathcal{C}^\pm_{n,m}$, steepest ascents of length $T_{m - 1}$ in the landscape of $\mathcal{C}^\pm_{n,\leq m - 1}$ will convert to steepest ascents of length $T_m = 7 + 2T_{m - 1}$ in the landscape of $\mathcal{C}^\pm_{n,\leq m}$.
To do this, we will look at all the ascents that take us from $x^*(\mathcal{C}^-_{n,m}) = 000000$ to $x^*(\mathcal{C}^+_{n,m}) = 111110$ in the landscape of the gadget $\mathcal{C}^+_{n,m}$ and vice-versa for the gadget $\mathcal{C}^+_{n,m}$.
All of these ascents are in \cref{fig:landscape-zeros} for $\mathcal{C}^+_{n,m}$ and in \cref{fig:landscape-ones} for $\mathcal{C}^-_{n,m}$.
Each arrow in \cref{fig:landscapes} is labeled by the fitness increase from flipping the appropriate bit.
The steepest ascent is bolded.

As we can see from the figures,
although there exists a minimal ascent of length five between these two assignments that does not flip the $x_{(m,6)}$ variable, this is not the steepest ascent.
Instead, the steepest ascent from $x^*(\mathcal{C}_{n,m}^-)$ to $x^*(\mathcal{C}_{n,m}^+)$ in the fitness landscape of $\mathcal{C}^+_{n,m}$ (and vice-versa for $\mathcal{C}^-_{n,m}$) takes seven steps and flips $x_{(m,6)}$ twice (at step \Circled{4} and \Circled{7}).

This double flip of $x_{(m,6)}$ is what creates the recursion in $\mathcal{C}^\pm_{n,\leq m}$ that forces the steepest ascent in $\mathcal{C}^\pm_m$ to trigger twice as many ascents in $\mathcal{C}^\pm_{m-1}$.
Specifically, although the first four steps of the steepest ascent in the landscape of $\mathcal{C}^\pm_{n,m}$ increase fitness by a large amount $\geq S - s_m > n$, step \Circled{5} increases fitness by only $s_m$.
Thus, the steepest ascent in the sublandscape spanned by $V_m$ `pauses' after step \Circled{4} and lets the steepest ascents in $V_{m - 1}$ take over with steps that increase fitness by an amount $\geq s_{m - 1} > s_m$.

With this intuition in mind, look at the steepest ascent in the fitness landscape of $\mathcal{C}^+_{n,\leq m}$ starting from $0^{6m}$.
For brevity, define the (partial) assignments $x^*_\pm$ as the peaks of $\mathcal{C}^\pm_{n,\leq m - 1}$: 
\begin{align}
x^*_- & := x^*(\mathcal{C}^-_{n,\leq m - 1}) = 0^{6(m-1)}\text{, and} \\ 
x^*_+ & := x^*(\mathcal{C}^+_{n,\leq m - 1}) = 1111100^{6(m - 2)}.
\end{align}
Now we can rewrite our starting assignment $x^*(\mathcal{C}^-_{n,\leq m}) = 0^{6m}$ as $000000 x^*_-$ 
and note that the first four flips are entirely in $V_m$:
\begin{equation}
\overbrace{\underbar{\textbf 0}00000x^*_-}^{x^*(\mathcal{C}^-_{n,\leq m})} \xrightarrow{\Circled{1}} 
1\underbar{\textbf 0}0\underbar{0}00x^*_- \xrightarrow{\Circled{2}} 
11\underbar{\textbf 0}\underbar{0}00x^*_- \xrightarrow{\Circled{3}} 
111\underbar{0}0\underbar{\textbf 0}x^*_- \xrightarrow{\Circled{4}} 
111\underbar{0}01\underline{\textbf x^*_-}
\label{eq:firstSteps}
\end{equation}
where the variables that can flip are underlined and the variable that will be flipped by steepest ascent is bolded.
For step \Circled{1} there is only one choice to flip. 
On the subsequent two steps (\Circled{2},\Circled{3}) the first of the two $0$s is chosen by steepest ascent because they increase fitness by $S - s_m$ and $S$ (which are both $> n$) while flipping the second $0$ to a $1$ would increase fitness by only $s_m \leq n$.

Step \Circled{4} is the most interesting.
It flips $x_{(m,6)}$ since that increases fitness by $S > n \geq s_m$. 
In so doing, this flip transforms the landscape for variables on $V_{\leq m - 1}$ from the one implemented by $\mathcal{C}^-_{n,\leq m - 1}$ to the one implemented by $\mathcal{C}^+_{n,\leq m - 1}$.
In the landscape of  $\mathcal{C}^+_{n,\leq m - 1}$, $x^*_-$ is no longer the peak, and hence it is underlined.
Furthermore $x^*_-$ is bolded because, by construction, all fitness increasing flips of variables in $V_{\leq m - 1}$ increase fitness by $\geq s_{m - 1} > s_m$ and so steepest ascent will flip variables in $V_{\leq m - 1}$ instead of the $x_{(m,3)}$ flip that only increases fitness by $s_m$:
\begin{equation}
111\underbar{0}01\underline{\textbf x^*_-} \xrightarrowdashed{T_{m-1} \text{ steps of } V_{\leq m - 1} \text{ in landscape of } \mathcal{C}^+_{n,\leq m - 1} }
111\underbar{\textbf 0}01x^*_+
\label{eq:firstRec}
\end{equation}
Once all the steps in $V_{\leq m - 1}$ are taken, steepest ascent can return to $V_m$, where the only remaining fitness-increasing step is the small step at $x_{(m,4)}$ that increases fitness by only $s_m$.
This step subsequently opens two more steps in $V_m$:
\begin{equation}
111\underbar{\textbf 0}01x^*_+ \xrightarrow{\Circled{5}}
1111\underbar{\textbf 0}1x^*_+ \xrightarrow{\Circled{6}}
11111\underbar{\textbf 1}x^*_+ \xrightarrow{\Circled{7}}
111110\underline{\textbf x^*_+}
\label{eq:secondSteps}
\end{equation}
As with the first four steps in \cref{eq:firstSteps}, the most interesting step is the final step (\Circled{7}).
It flips $x_{(m,6)}$ from $1$ to $x_{(m,6)} = 0$.
In so doing, this flip transforms the landscape for variables on $V_{\leq m - 1}$ from the one implemented by $\mathcal{C}^+_{n,\leq m - 1}$ to the one implemented by $\mathcal{C}^-_{n,\leq m - 1}$.
Thus, `undoing' step \Circled{4}.
In the landscape of  $\mathcal{C}^-_{n,\leq m - 1}$, $x^*_+$ is no longer the peak, and hence underlined and bolded.
Steepest ascent finishes with all remaining steps in $V_{\leq m-1}$:
\begin{equation}
111110\underline{\textbf x^*_+} \xrightarrowdashed{T_{m-1} \text{ steps of } V_{\leq m - 1} \text{ in landscape of } \mathcal{C}^-_{n,\leq m - 1} }
\underbrace{111110x^*_-}_{x^*(\mathcal{C}^+_{n,\leq m})}
\label{eq:secondRec}
\end{equation}
Similar to the steepest ascent in \cref{eq:firstSteps,eq:firstRec,eq:secondSteps,eq:secondRec},  the steepest ascent in the fitness landscape of $\mathcal{C}^-_{n,\leq m}$ starting from the assignment $x^*(\mathcal{C}^+_{n,\leq m})$ has the following steps:
\begin{align}
& \overbrace{\underbar{\textbf 1}11110x^*_-}^{x^*(\mathcal{C}^+_{n,\leq m})} \xrightarrow{\Circled{1}} 
0\underbar{1}1\underbar{\textbf 1}10x^*_- \xrightarrow{\Circled{2}} 
0\underbar{1}10\underbar{\textbf 1}0x^*_- \xrightarrow{\Circled{3}} 
0\underbar{1}100\underbar{\textbf 0}x^*_- \xrightarrow{\Circled{4}} 
0\underbar{1}1001\underline{\textbf x^*_-} \label{eq:firstSteps_minus} \\
& 0\underbar{1}1001\underline{\textbf x^*_-} \xrightarrowdashed{T_{m-1} \text{ steps of } V_{\leq m - 1} \text{ in landscape of } \mathcal{C}^+_{n,\leq m - 1} }
0\underbar{\textbf 1}1001x^*_+ \label{eq:firstRec_minus} \\
& 0\underbar{\textbf 1}1001x^*_+ \xrightarrow{\Circled{5}}
00\underbar{\textbf 1}001x^*_+ \xrightarrow{\Circled{6}}
00000\underbar{\textbf 1}x^*_+ \xrightarrow{\Circled{7}}
000000\underline{\textbf x^*_+} \label{eq:secondSteps_minus} \\
& 000000\underline{\textbf x^*_+} \xrightarrowdashed{T_{m-1} \text{ steps of } V_{\leq m - 1} \text{ in landscape of } \mathcal{C}^-_{n,\leq m - 1} }
\underbrace{000000x^*_-}_{x^*(\mathcal{C}^-_{n,\leq m})} \label{eq:secondRec_minus} 
\end{align}
Finally, both the steepest ascent in the fitness landscape of $\mathcal{C}^+_{n,\leq m}$ starting from the assignment $x^*(\mathcal{C}^-_{n,\leq m})$ (\cref{eq:firstSteps,eq:firstRec,eq:secondSteps,eq:secondRec}) 
and the steepest ascent in the fitness landscape of $\mathcal{C}^-_{n,\leq m}$ starting from the assignment $x^*(\mathcal{C}^+_{n,\leq m})$ (\cref{eq:firstSteps_minus,eq:firstRec_minus,eq:secondSteps_minus,eq:secondRec_minus}) 
have lengths of $T_m = 7 + 2T_{m - 1}$ steps.
The recurrence of $T_1 = 7$ and $T_m = 7 + 2T_{m-1}$ is solved by $T_m = 7(2^m - 1)$.
\end{proof}

Combining \cref{thm:steepest-ascent} with that greedy local search follows steepest ascents, and using the facts that our instances are very sparse by \cref{prop:pw2}, and that they are oriented by \cref{prop:oriented}, we conclude that greed is slow on sparse graphs of oriented valued constraints:

\begin{theorem}\label{thm:greedy-runtime}
    Greedy local search can take $7(2^n-1)$ steps to find the unique local optimum in oriented binary Boolean \VCSP-instances $\mathcal{C}_{n,\leq n}^\pm$ on $6n$ variables with $6n$ unary constraints and $7n-1$ binary constraints, maximum degree $3$ and pathwidth $2$.
\end{theorem}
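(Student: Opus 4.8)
The plan is to derive \cref{thm:greedy-runtime} as a straightforward specialization and combination of the three structural results already established, together with the defining fact that greedy local search follows a steepest ascent. I would first instantiate all of the preceding results at $m = n$, since $\mathcal{C}^\pm_{n,\leq n}$ is exactly the instance produced by the construction of \cref{sec:construction} using the maximal number of gadgets permitted by the constraint $1 \le m \le n$.

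Next I would dispatch the elementary bookkeeping of the instance's size and sparsity. Counting directly from the construction, there are $n$ gadgets, each on $6$ variables, giving $6n$ Boolean variables; and since every gadget carries one unary constraint per variable, exactly $6n$ unary constraints. For the binary constraints, each gadget is a $6$-cycle contributing $6$ binary constraints, for $6n$ in total, and the linking constraints with scopes $\{(k,6),(k-1,1)\}$ for $2 \le k \le n$ add a further $n-1$, bringing the total to $6n + (n-1) = 7n - 1$. The bounds of maximum degree $3$ and pathwidth $2$ are then immediate from \cref{prop:pw2}, and orientedness is immediate from \cref{prop:oriented}.

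Finally I would connect the combinatorial length bound to the algorithmic run time. Because the landscapes implemented by $\mathcal{C}^\pm_{n,\leq n}$ are oriented, and hence semismooth and single-peaked (by \cref{prop:oriented} and the discussion in \cref{sec:background}), every ascent from a non-peak assignment terminates at the unique peak. I would therefore initialize greedy local search at the antipodal assignment used in \cref{thm:steepest-ascent} — namely $x^*(\mathcal{C}^-_{n,\leq n}) = 0^{6n}$ when solving the $+$ instance and $x^*(\mathcal{C}^+_{n,\leq n}) = 1111100^{6(n-1)}$ when solving the $-$ instance. Since greedy local search follows a steepest ascent by definition, and \cref{thm:steepest-ascent} at $m = n$ exhibits a steepest ascent of length exactly $7(2^n - 1)$ from this starting assignment to the unique peak, greedy local search realizes this run time, which establishes the theorem.

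I do not expect a substantial obstacle, since all the genuine difficulty has already been discharged in \cref{prop:oriented} and, above all, in \cref{thm:steepest-ascent}. The only points meriting care are bookkeeping ones: verifying the constraint counts exactly, and noting that the word ``can'' in the statement is witnessed by the specific choice of initial assignment. That is, greedy need only be \emph{able} to follow the long steepest ascent guaranteed by \cref{thm:steepest-ascent}, so no argument ruling out alternative tie-breaking or shorter ascents is required.
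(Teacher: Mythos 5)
Your proposal is correct and takes essentially the same route as the paper, which obtains \cref{thm:greedy-runtime} exactly by combining \cref{thm:steepest-ascent} (at $m=n$) with \cref{prop:pw2} and \cref{prop:oriented}, together with the fact that greedy local search follows a steepest ascent from the stated initial assignment. Your explicit bookkeeping of the constraint counts ($6n$ unaries, and $6n + (n-1) = 7n-1$ binaries from the $n$ six-cycles plus $n-1$ linking constraints) is a detail the paper leaves implicit but is verified correctly.
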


\section{Discussion}

There are two important structural features of our construction: that it is sparse (\cref{prop:pw2}) and that is is oriented (\cref{prop:oriented}).
Sparseness resolves in the negative the two open questions on the efficiency of greedy local search for \VCSP-instances of degree $3$ and treewidth $2$.
Given that all ascents are short for \VCSP{s} of degree $2$~\cite{KazThesis} and treewidth $1$~\cite{repCP}, the $\mathcal{C}^\pm_{n,\leq n}$ family of instances that are hard for greedy local search belong to the simplest class of instances where \emph{some} local search has the possibility to fail. 
That $\mathcal{C}^\pm_{n,\leq n}$ is an oriented \VCSP{} (\cref{prop:oriented}), reminds us that this failure of greedy local search is not due to the popular suspect of ``bad'' local peaks blocking the way to a hard to find global peak.
Oriented \VCSP{s} only produce semismooth fitness landscapes that are single peaked on each sublandscape~\cite{conditionallySmooth}:
there are no ``bad'' local peaks in this family of instances, there is the single global peak.
Further, in semismooth fitness landscapes, there is always a short ascent of Hamming distance from any initial assignments to unique peak~\cite{completelyUnimodal,evoPLS}.
A short ascents is always available in $\mathcal{C}^\pm_{n,\leq n}$, but greed stops us from find it.

Many other local search heuristics do not lose their way on oriented \VCSP{s}.
In contrast to \cref{thm:steepest-ascent}, consider how a local search method that chooses improving steps at random -- known as random ascent -- behaves on $\mathcal{C}_{n,\leq n}^-$ starting from any assignment.
Since there are $6n$ variables, any improving bit flip has a probability of at least $1/6n$ of being chosen as the next step. 
Thus, if $x_{(n,1)} \neq 0 (= x^*_{(n,1)})$ then after an expected number of at most $6n$ steps, it will be flipped to $0$. 
Given the oriented constraints, it cannot be flipped back for the rest of the run. 
Next, if -- at this point in the run -- the current assignment has $x_{(n,2)} \neq 0$ or $x_{(n,4)} \neq 0$ then in an expected number of about $2 \cdot 6n$ steps, $x_{(n,2)}$ and $x_{(n,4)}$ will also be flipped to $0$.
Given the oriented constraints and that the only in-edge to these variables is from $x_{(n,1)}$ which is now fixed to $0$, $x_{(n,2)}$ and $x_{(n,4)}$ will not be flipped back for the rest of the run.
This logic continues for $x_{(n,3)}$ and $x_{(n,5)}$, then $x_{(n,6)}$, then $x_{(n - 1,1)}$ and all other remaining variables following the arrows of the oriented constraints (i.e., in the topological order of the oriented \VCSP). 
This simple argument gives us a bound of $(6n)^2$ on the expected number of steps for random ascent to find the peak. 

The behavior of inadvertently fixing variables to their optimal state in the topological order of an oriented \VCSP{} is a feature of many local search methods, not just random ascent.
Kaznatcheev and Vazquez Alferez \textcite{conditionallySmooth} give a more careful analysis of random ascent on general oriented binary Boolean \VCSP{s}.
Applying their bound yields an expected number of at most $32n^2-18n$ steps to solve $\mathcal{C}_{n,\leq n}^\pm$.
For the task of solving general oriented \VCSP{s} they give quadratic bounds on the number of steps taken by, random ascent, simulated annealing, Zadeh's simplex rule, the Kerninghan-Lin heuristic, and various other local search methods. 
Thus, not only does greedy local search require an exponential number of steps on oriented \VCSP{s} of maximum degree $3$ and pathwidth $2$, but lots of other local search methods can solve these instances in quadratic time.
We conclude that, among local search methods, greed is slow.




\bibliography{SAextended}

\end{document}